\theoremstyle{remark}
\def\BibTeX{{\rm B\kern-.05em{\sc i\kern-.025em b}\kern-.08em
    T\kern-.1667em\lower.7ex\hbox{E}\kern-.125emX}}
\newcommand*{\Scale}[2][4]{\scalebox{#1}{$#2$}}%
\newtheorem{assumption}{Assumption}
\newtheorem{remark}{Remark}
\newtheorem{theorem}{Theorem}
\newtheorem{definition}{Definition}
\title{\LARGE \bf
Interaction-aware Traffic Prediction and Scenario-based Model Predictive Control for Autonomous Vehicles on Highways*
}
\author{Xiaorong Zhang$^{1}$, Sahar Zeinali$^{1}$, and Georg Schildbach$^{1}$
\thanks{*This work was supported by the Deutsche Forschungsgemeinschaft (DFG, German Research Foundation) – project number 460891204.}
\thanks{$^{1}$Institute for electrical engineering in Medicine,University of luebeck, Luebeck, Germany
    {\tt\small \{xiaorong.zhang, sahar.zeinali, georg.schildbach\}@uni-luebeck.de}}%
}
\begin{document}

\maketitle
\thispagestyle{empty}
\pagestyle{empty}

\begin{abstract}

This paper addresses the problem of traffic prediction and control of autonomous vehicles on highways. A modified Interacting Multiple Model Kalman filter algorithm is applied to predict the motion behavior of the traffic participants by considering their interactions. A scenario generation component is used to produce plausible scenarios of the vehicles based on the predicted information. A novel integrated decision-making and control system is proposed by applying a Scenario-based Model Predictive Control approach. The designed controller considers safety, driving comfort, and traffic rules. The recursive feasibility of the controller is guaranteed under the inclusion of the `worst case' as an additional scenario to obtain safe inputs. Finally, the proposed scheme is evaluated using the HighD dataset. Simulation results indicate that the vehicle performs safe maneuvers in different traffic situations under the designed control framework.

\end{abstract}

\section{INTRODUCTION}

\subsection{Motivation}
During the past decades, the design of control systems for autonomous vehicles on highways has been extensively studied. The primary purpose of these systems is to safely control the ego vehicle (EV) by utilizing the predicted motion states of the surrounding target vehicles (TVs) \cite{AVliterature}. The predicted states are usually uncertain, so generating safe, comfortable, energy-efficient, and real-time capable control strategies is challenging.

\subsection{Literature Review}
Physics-based \cite{dynamicmodel}, maneuver-based \cite{intentionestimation}, and interaction-aware motion models \cite{trafficpredictionsurvey} are used for the state prediction of vehicles. Since the interconnections between traffic participants are considered in the interaction-aware models, they are a good choice for describing realistic scenarios. 
Specifically, the mutual influence of the vehicles is usually expressed from a finite set of trajectory clusters or Dynamic Bayesian Networks (DBNs) \cite{interactiveprototype}\cite{interactiveDBN}.
Moreover, a novel interaction-aware traffic model is proposed by Lefkopoulos et al. \cite{lefkopoulos2020}, combining the physics of the vehicles, the intention of the drivers, and a no-collision assumption using an Interacting Multiple Model Kalman filter (IMM-KF), which establishes a new scheme with improved computational efficiency.

Model Predictive Control (MPC) has been widely applied in designing a planner and controller for the EV by considering several constraints, such as the traffic rules, safety, and the comfort of driving \cite{MPC:deterministic, Zhou2022, MPC:laneexchange}. The underlying reason for the prevalence of MPC is its ability to handle explicit constraints in an optimization problem with a moving horizon \cite{MPC}. As a significant variant of Stochastic MPC, Scenario-based MPC (SCMPC) generates corresponding constraints in terms of the possible situational context\cite{SCMPC:lanechangeassistance}. It has been successfully implemented under several highway traffic conditions \cite{SCMPC:lon} \cite{SCMPC:autonomousdriving}, as it is easily compatible with the traffic prediction component and can handle uncertainty using the information, contained in a few scenarios.

Safety is the most critical aspect of controlling the EV. This feature becomes more challenging in emergency scenarios, e.g., unexpected deceleration of the leading vehicle (LV) or a sudden cut-in of a TV. These circumstances are identified as a safety-critical-event (SCE) \cite{SCE:report} \cite{SCE:2017}, where the EV applies immediate braking against the crash, and it may lead to a deceleration of the EV until a standstill. Adaptive Cruise Control (ACC) is a helpful tool for dealing with SCEs, where the EV makes decisions based on the information about the LV. However, a large time headway in this algorithm may lead to over-conservative actions \cite{ACC}. A safety controller is proposed based on ACC in \cite{safety:ACC}, where the EV uses a predefined deceleration profile when the LV performs a maneuver in an SCE manner. Another representative solution to SCEs is the rigorous formalizing mathematical model of Responsibility-Sensitive-Safety (RSS)\cite{RSS_first}. In this model, a safety distance is defined by assuming a 
`worst case' scenario, and the EV responds to an SCE by decelerating at a predefined rate and without applying a full braking force. This approach might be sensitive to the parameter design, and the subtle change of the parameter set might lead to a totally different decision strategy\cite{safety:RSScalibration}.  

\subsection{Contribution}
This paper presents a method for controlling the EV on highways with a comprehensive but simple structure, combining the interaction-aware motion prediction of IMM-KF with the decision-making and control design based on SCMPC. Furthermore, the safety of the EV under the control inputs is guaranteed by theoretically proving the recursive feasibility of the SCMPC under the consideration of the `worst case' scenario. To the best of the authors' knowledge, no work handles all the mentioned problems with an integrated SCMPC architecture and proves the feasibility of the algorithm at the same time. The performance of the proposed approach is evaluated for different HighD dataset scenarios, indicating that EV performs safe and desirable maneuvers by applying the designed control architecture.

\section{CONTROL ARCHITECTURE}
The proposed control architecture is shown in Fig. \ref{fig:control structure} and works as follows. First, the mode states of the TVs are predicted based on the IMM-KF. This information is sent to a scenario generation component to produce all possible maneuvers of the TVs. Then, the most likely scenarios are filtered based on a predefined probability threshold. Finally, an SCMPC-based control system consisting of two control modes is established, which corresponds to 'following the current lane' or 'changing the lane'. In addition to the filtered scenarios, the `worst case' scenario is also included in the design of both controllers. A decision-making module chooses the desired maneuver of the EV based on the cost function value of the two control modes. 
 \begin{figure} [thbp] 
    \centering
    \includegraphics[width=0.5\textwidth]{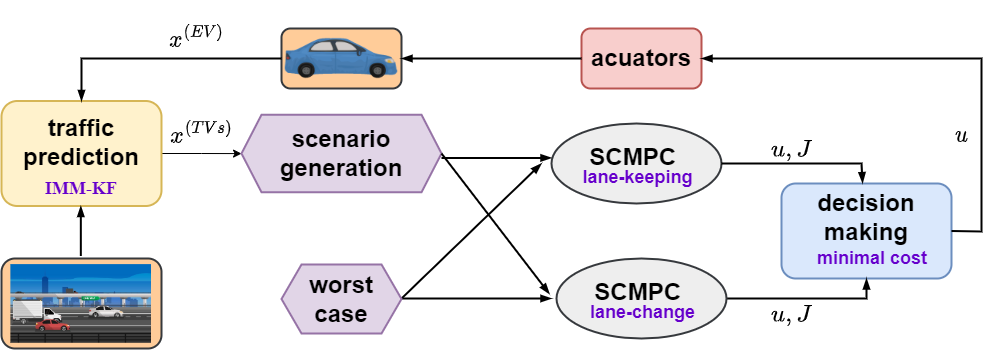}
    \caption{Schematic diagram of the proposed control structure}
    \label{fig:control structure}
\end{figure}

\section{SCENARIO GENERATION}

\subsection{Intention-based Policy Mode}
In this subsection, the longitudinal and lateral policy modes of the vehicles are described. In the longitudinal direction, the 'velocity tracking' (VT) and 'distance keeping' (DK) modes \cite{lefkopoulos2020} are utilized, where the EV tracks a reference velocity in the VT mode and keeps a safe distance from its LV in the DK mode, respectively. In the lateral direction, three modes corresponding to 'lane 1', 'lane 2', and 'lane 3' are applied to represent the target lane of the vehicles, as shown in Fig. \ref{fig:lanes}. For example, if EV is currently at lane 2, it can change to the right lane, lane 1, or the left lane, lane 3, according to the associated lateral mode. Thus, the total number of modes is 6, $M=6$.
\begin{figure}
    \centering
    \framebox{
   \includegraphics[width=0.35\textwidth]{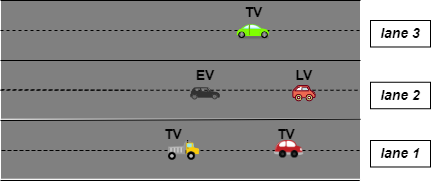}}
    \caption{Illustration of  lanes and vehicles}
    \label{fig:lanes}
\end{figure}

The common state vector $x_k$ in all modes at time $k$ is
\begin{equation}\label{equ:motion state_IMM}
x_k \triangleq
 \begin{bmatrix}
    \underbrace{\begin{matrix} p_{\text{lon},k} & v_{\text{lon},k} & a_{\text{lon},k} \end{matrix}}_{x_{\text{lon},k}} &  \underbrace{\begin{matrix} p_{\text{lat},k} & v_{\text{lat},k} & a_{\text{lat},k} \end{matrix}}_{x_{\text{lat},k}} 
 \end{bmatrix}^\top,
\end{equation}
here $p_{*,k}, v_{*,k}, a_{*,k}$ are respectively the position, velocity, and acceleration in the corresponding direction, $* \in \{\text{lon},\text{lat}\}$. The unknown reference velocity $v_{\text{ref},k}$ (VT mode) or the reference time gap $t_{\text{gap},k}$ (DK mode) is also included in the longitudinal policy mode to be estimated. Therefore, the full state vector $z_k$ in each policy mode at time $k$ is
\begin{equation}\label{equ:motion state_MPC}
z_k \triangleq
 \begin{bmatrix}
    \underbrace{\begin{matrix} x_{\text{lon},k} & r_{\text{ref},k} \end{matrix}}_{z_{\text{lon},k}} &  \underbrace{\begin{matrix} x_{\text{lat},k} \end{matrix}}_{z_{\text{lat},k}} 
 \end{bmatrix}^\top,
\end{equation}
where $r_{\text{ref},k}$ indicates the unknown reference velocity $v_{\text{ref},k}$ or reference time gap $t_{\text{gap},k}$. We use $x_k$ or $z_k$ to denote the common or full states of an arbitrary vehicle, and $x_k^{(\circledast)}$ or $z_k^{(\circledast)}$ to clarify the common or full state of a specific vehicle throughout the paper, where $\circledast  \in$ \{EV, TV, LV\}. To make the policies closer to the driver's real intention, two linear quadratic regulator (LQR)-based feedback controllers are included in longitudinal and lateral policy modes, the associated control gains are defined as follows:
\begin{subequations} \label{equ:controlgain}
\begin{align}
K_{\text{lon}}^{(\diamond)} & =\begin{bmatrix}
    K_{\text{lon},1}^{(\diamond)} & K_{\text{lon},2}^{(\diamond)} & K_{\text{lon},3}^{(\diamond)}
\end{bmatrix}^\top, \\
 K_{\text{lat}} &=\begin{bmatrix}
    K_{\text{lat},1} & K_{\text{lat},2} & K_{\text{lat},3}
\end{bmatrix}^\top.
\end{align}
\end{subequations}
where $\diamond \in \{\text{VT},\text{DK}\}$. Denoting the sampling time as $T$, the discrete form of each policy mode at time $k$ is
\begin{subequations}\label{equ:motion models}
\begin{align}
       \underbrace{\begin{bmatrix}
           z_{\text{lon},k+1} \\
           z_{\text{lat},k+1}
       \end{bmatrix}}_{z_{k+1}}  & = \underbrace{\begin{bmatrix}
            F_{\text{lon},k}^{(\diamond)} & \mathbf{0}_{4\times 3}\\
           \mathbf{0}_{3\times 4} & F_{\text{lat},k}^{(\lambda)}
       \end{bmatrix}}_{F_k}  \underbrace{\begin{bmatrix}
           z_{\text{lon},k} \\
           z_{\text{lat},k}
       \end{bmatrix}}_{z_k} + \underbrace{\begin{bmatrix}
            E_{\text{lon},k}^{(\diamond)} \\
            E_{\text{lat},k}^{(\lambda)}
       \end{bmatrix}}_{E_k} +\omega _k, \\
       y_k &= \bm{I}_{7\times7} z_k + \nu _k,
\end{align}
\end{subequations}
where $y_k \in \mathbb{R}^{7}$ and $\bm{I}_{7\times7}$ are the measurement vector and observation matrix. $\omega_k$ and $\nu _k$ are the process noise and measurement noise, which are assumed to be normally distributed, with the covariance $Q$ and $R$, respectively. The system matrices for longitudinal direction are
\begin{fleqn}
\begin{equation*}\label{equ:statematrices1}
F_{\text{lon},k}^{\text{(VT)}}  = \begin{bmatrix}
    1 & T & \frac{T^2}{2} & 0\\
    0 & 1-\frac{K_{\text{lon},2}^{\text{(VT)}}T^2}{2} & T-\frac{K_{\text{lon},3}^{\text{(VT)}}T^2}{2} & \frac{K_{\text{lon},2}^{\text{(VT)}}T^2}{2}\\
    0 & -K_{\text{lon},2}^{\text{(VT)}}T & 1-K_{\text{lon},3}^{\text{(VT)}}T & K_{\text{lon},2}^{\text{(VT)}}T\\
    0 & 0 & 0 & 1
\end{bmatrix}, 
\end{equation*}
\end{fleqn}
\begin{equation*}\label{equ:statematrices2}
\begin{split}
&F_{\text{lon},k}^ {\text{(DK)}}  =  \\
& \Scale[1]
{\begin{bmatrix}
    1-\frac{K_{\text{lon},1}^{\text{(DK)}}{T}^3}{6} & -\frac{K_{\text{lon},1}^{\text{(DK)}}{T}^2}{2} & -K_{\text{lon},1}^{\text{(DK)}}T & 0\\
    T-\frac{K_{\text{lon},2}^{\text{(DK)}}{T}^3}{6} & 1-\frac{K_{\text{lon},2}^{\text{(DK)}}{T}^2}{2} & -K_{\text{lon},2}^{\text{(DK)}}T & 0\\
    \frac{{T}^2}{2}-\frac{K_{\text{lon},3}^{\text{(DK)}}{T}^3}{6} & T-\frac{K_{\text{lon},3}^{\text{(DK)}}{T}^2}{2} & 1-K_{\text{lon},3}^{\text{(DK)}}T & 0 \\
    -\frac{K_{\text{lon},1}^{\text{(DK)}}v_{\text{lead},k}{T}^3}{6} & -\frac{K_{\text{lon},1}^{\text{(DK)}}v_{\text{lead},k}{T}^2}{2} & -K_{\text{lon},2}^{\text{(DK)}}v_{\text{lead},k}T & 1
\end{bmatrix}^\top}, 
\end{split}
\end{equation*}
and the LQR-related input matrices are 
\begin{fleqn}
\begin{equation*}
E_{\text{lon},k}^{\text{(VT)}} = \mathbf{0}_{4\times1},
\end{equation*}  
\end{fleqn}
\begin{equation*}
\begin{split}
 &\Scale[1]{E_{\text{lon},k}^{\text{(DK)}}} = \\
 &\Scale[1]{\begin{bmatrix}
\frac{K_{\text{lon},1}^{\text{(DK)}}{T}^3}{6}-1 & \frac{K_{\text{lon},2}^{\text{(DK)}}{T}^3}{6}-T & \frac{K_{\text{lon},3}^{\text{(DK)}}{T}^3}{6}-\frac{T^2}{2} & 0\\
\frac{K_{\text{lon},1}^{\text{(DK)}}{T}^2}{2} & \frac{K_{\text{lon},2}^{\text{(DK)}}{T}^2}{2}-1 & \frac{K_{\text{lon},3}^{\text{(DK)}}{T}^2}{2}-T & 0\\
K_{\text{lon},1}^{\text{(DK)}}T & K_{\text{lon},2}^{\text{(DK)}}T & K_{\text{lon},3}^{\text{(DK)}}T-1 & 0 \\
0 & 0 & 0 & 0
\end{bmatrix}\begin{bmatrix}
p_{\text{lon},k}^{(\text{LV})} \\
v_{\text{lon},k}^{(\text{LV})}\\
a_{\text{lon},k}^{(\text{LV})} \\
 0
\end{bmatrix}}.
\end{split}
\end{equation*}
Consider the position $p^{(\lambda)}$ of the center line of the target lane $\lambda$, the system matrix and input matrix for the lateral direction and for the  target lane $\lambda$ ($\lambda= 1,2,3$) are as follows:
\begin{subequations}
\begin{align}
F_{\text{lat},k}^{(\lambda)} &= \begin{bmatrix}
   -\frac{K_{\text{lat},1}{T}^3}{6} & -\frac{K_{\text{lat},2}{T}^3}{6} & -\frac{K_{\text{lat},3}{T}^3}{6}\\
    -\frac{K_{\text{lat},1}{T}^2}{2} & -\frac{K_{\text{lat},2}{T}^2}{2} & -\frac{K_{\text{lat},3}{T}^2}{2}\\
    -K_{\text{lat},1}T & -K_{\text{lat},2}T & -K_{\text{lat},3}T-1 
\end{bmatrix}, \\
E_{\text{lat},k}^{(\lambda)} &= \begin{bmatrix}
    \frac{K_{\text{lat},1}{T}^3}{6}p^{(\lambda)} \\
    \frac{K_{\text{lat},1}{T}^2}{2}p^{(\lambda)}\\  K_{\text{lat},1}Tp^{(\lambda)} 
\end{bmatrix}.
\end{align}
\end{subequations}

\subsection{Interaction-aware Estimation and Prediction}
Since the motion behavior of each vehicle is influenced by its surrounding vehicles, the state estimation and prediction of each vehicle are calculated in descending priority order by considering the interaction between vehicles, as proposed in \cite{lefkopoulos2020}. In particular, the priority criteria are given by
\begin{itemize}
    \item[(i)] If two vehicles are in the same lane, the preceding vehicle has higher priority. 
    \item[(ii)] If two vehicles are in different lanes, the vehicle with the higher longitudinal progress over a specific horizon has higher priority. 
\end{itemize} 

For the sake of clarity, a policy mode corresponds to one model of the IMM-KF in this work. In the IMM-KF \cite{lefkopoulos2020}, we consider the Markov jump linear system \eqref{equ:motion models}, where the transition probability from mode $i$ to mode $j$ is denoted as $\pi^{(i|j)}$, and $\pi^{(i|j)} \in [0,1]$, $i,j\in\{1, 2, ...,6\}$. Since the reference parameter of the VT and DK mode is different, we mix individual common estimates and initialize each mode in the first step as: 
\begin{subequations}
\begin{align}
c^{(i)} & = \sum_{j = 1}^{M} \pi^{(j|i)}\mu_{k-1}^{(j)}, \\
\mu_{k-1}^{(j|i)-} &= \frac{\pi^{(j|i)}\mu_{k-1}^{(j)}}{c^{(i)}}, \\
\Bar{x}_{k-1}^{(i)-} &= \sum_{j = 1}^{M}\mu_{k-1}^{(j|i)-}\hat{x}_{k-1}^{(j)-}, \\
\begin{split}
    \Bar{P}_{k-1}^{(i)-} &= \sum_{j = 1}^{M} \mu_{k-1}^{(j|i)-} \left[ {P}_{k-1}^{(j)-} \right. \\   
    & \left. +\left(\Bar{x}_{k-1}^{(i)-}-\hat{x}_{k-1}^{(j)-}\right)
    \left( \Bar{x}_{k-1}^{(i)-}-\hat{x}_{k-1}^{(j)-} \right)^\top \right],
\end{split}
\end{align}
\end{subequations} 
where $\mu_{k-1}^{(j|i)-}$ is the mixing conditional mode probability, $\hat{x}_{k-1}^{(j)-}$, ${P}_{k-1}^{(j)-}$ are the common state estimation and its associated covariance of the mode $j$, which are part of the full state estimation $\hat{z}_{k-1}^{(j)-}$, and its associated covariance $\mathbb{P}_{k-1}^{(j)-}$. The fused common state estimation and its associated covariance are
$\Bar{x}_{k-1}^{(i)-}$ and $\Bar{P}_{k-1}^{(i)-}$. The corresponding fused full state estimation and its associated covariance are $\Bar{z}_{k-1}^{(i)-}$ and $\Bar{\mathbb{P}}_{k-1}^{(i)-}$. The second step is to predict and update each policy mode as
\begin{subequations}
\begin{align}
       \hat{z}_{k-1}^{(i)+} &= F_{k-1}^{(i)}\Bar{z}_{k-1}^{(i)-}+E_{k-1}^{(i)},\\
       \mathbb{P}_{k-1}^{(i)+} &=  F_{k-1}^{(i)}\Bar{\mathbb{P}}_{k-1}^{(i)-}F_{k-1}^{(i) \top}+Q_{k-1}^{(i)}, \\
       \Tilde{y}_{k}^{(i)} &= y_{k}^{(i)} - \bm{I}_{7\times7}\hat{z}_{k-1}^{(i)+}, \\
       r_{k}^{(i)} &= \bm{I}_{7\times7}\mathbb{P}_{k-1}^{(i)+}\bm{I}_{7\times7}^{\top}+R_{k}^{(i)}, \\
       L_{k}^{(i)} &= \mathbb{P}_{k-1}^{(i)+}\bm{I}_{7\times7}^\top {r_{k}^{(i)}}^{-1}, \\
       \hat{z}_{k}^{(i)-} &= \hat{z}_{k-1}^{(i)+}+L_{k}^{(i)}\Tilde{y}_{k}^{(i)}, \\
       \mathbb{P}_{k}^{(i)-} &= (\bm{I}_{7\times7}-L_{k}^{(i)} \bm{I}_{7\times7})\mathbb{P}_{k-1}^{(i)+} ,
\end{align}
\end{subequations}
using the prior state estimate 
 $\hat{z}_{k-1}^{(i)+} $ and its covariance $\mathbb{P}_{k-1}^{(i)+}$, the innovation residual $ \Tilde{y}_{k}^{(i)}$ and its covariance $ r_{k}^{(i)}$, Kalman gain $L_{k}^{(i)}$, and posterior predicted state estimate 
 $\hat{z}_{k}^{(i)-} $ and covariance $\mathbb{P}_{k}^{(i)-}$. 
 Based on the estimation result of individual mode $ \hat{z}_{k}^{(i)-}$, the state prediction of each policy mode is
\begin{subequations} \label{equ:state prediction}
\begin{align}
   \hat{z}_{t|k} = \phi(t,1) \hat{z}_{k}^{-} + \sum_{\delta  = k+1}^{t}  \phi(t,\delta ) E_{\delta -1},  \\
    \phi(t,\delta ) = \begin{cases}
        (\Pi_{\eta =\delta }^{t-1} F_\eta ^\top)^\top & \text{if} ~ t>\delta  \\
        \bm{I}_{7\times7}, & \text{if} ~ t = \delta 
    \end{cases},  
\end{align}
\end{subequations}
where $t= k+1,...,k+1+N$, and $N$ is the prediction horizon. In order to obtain a collision-free prediction ('no-collision prediction'), a mixed integer quadratic programming (MIQP) problem is formulated to get the modified state estimation $\hat{z}_{k}^{(\text{proj)}-}$, where the safety constraints between the studied vehicle and other vehicles which have higher priority are considered. Note that the state estimation of each policy mode is still $\hat{z}_{k}^{-}$, and only the state prediction is modified in terms of $\hat{z}_{k}^{(\text{proj)}-}$ and \eqref{equ:state prediction}. The state estimation error between 
 $\hat{z}_{k}^{-}$ and $\hat{z}_{k}^{(\text{proj)}-}$, and its covariance are used to augment the innovation residual $\bar{y}_{k}^{{(i)}}$ and its covariance ${{r}_{k}}^{{(i)}}$ as $\breve{y}_{k}^{{(i)}}$ and ${\tilde{r}_{k}}^{{(i)}}$. Then, the policy mode probability is updated based on the augmented matrices 
    \begin{subequations}
        \begin{align}
            \Tilde{L}_{k}^{(i)} &= \frac{ \text{exp}(-\frac{1}{2} \breve{y}_{k}^{{(i)}\top} {\tilde{r}_{k}}^{{(i)-1}} 
            \breve{y}_{k}^{(i)}) }{{\left\lvert 2\pi \tilde{r}_{k}^{(i)}\right\rvert}^{1/2} }, \\
            \Tilde{\mu}_{k}^{(i)} &= \frac{c^{(i)}\Tilde{L}_{k}^{(i)}}{\sum_{j=1}^M c^{(j)}\Tilde{L}_{k}^{(j)}}.
        \end{align}
    \end{subequations}
    The final step is to mix state estimation and its covariance according to the updated probability of the individual mode
\begin{subequations}
\begin{align}
\hat{x}_{k}^{-} &= \sum_{i = 1}^{M}\tilde{\mu}_{k}^{(i)}\hat{x}_{k}^{(i)-}, \\
    P_{k}^{-} &=  \sum_{i = 1}^{M} \Tilde{\mu}_{k}^{(i)}[P_{k}^{(i)-} +(\hat{x}_{k}^{-}-\hat{x}_{k}^{(i)-})(\hat{x}_{k}^{-}-\hat{x}_{k}^{(i)-})^\top].
\end{align}
\end{subequations}
The updated state estimation is also modified in terms of a MIQP problem to guarantee safety over the whole prediction horizon. The readers are referred to Lefkopoulos et al. \cite{lefkopoulos2020} for more details about the mentioned method.

\subsection{Scenario Generation of TVs}
A scenario is defined as a tuple of motion maneuvers for all TVs. Assume that the number of investigated TVs is $V$, then a total of $M^V$ possible scenarios can be generated. $\mu^{(n)}_i$ is the probability of
 TV $n$ with the policy mode $i$, $i \in \{1,2,...,6\}$.
Assuming statistical independence of each vehicle's no-collision prediction over the prediction horizon, then the probability of the scenario $s$ is calculated by
\begin{equation}
     \text{Pr}(s) = \overset{V}{\underset{n=1}{\prod}} {\mu^{(n)}_i}, ~~~ s=1,...,M^V,
\end{equation}
where $\sum_{s=1}^{M^V} \text{Pr}{(s)}= 1$. In order to have high-probability scenarios, scenarios with a probability less than a predefined threshold $\underline{P}$ are not considered. The probability of the remaining scenarios is normalized by
\begin{equation}
    \overline{\text{Pr}}(s) = \frac{\text{Pr}(s)}{1-\sum_{\zeta =1}^{\theta} \text{Pr}(\zeta) }, ~~~ s=1,...,M^V-\theta,
\end{equation}
where $\theta$ is the total number of scenarios with probability less than $\underline{P}$ .

\section{SCENARIO-BASED MODEL PREDICTIVE CONTROL}
Based on the predicted scenarios of the TVs, a feasible trajectory for the EV is calculated by solving a constrained finite-time optimal control problem (CFTOCP) in a moving horizon fashion. The objective of the optimization problem is to follow the planned reference trajectory with minimum effort and with the consideration of safety constraints, traffic rules, and driving comfort. The first computed control input of the CFTOCP is fed to the system at each time step. 

\subsection{Vehicle Model}
The Jerk Model \cite{jerkmodel} is applied to represent the dynamics of the EV:
\begin{equation} \label{equ:vehiclemodel}
    \underbrace{\begin{bmatrix}
       p_{*,k+1}^{\text{(EV)}} \\v_{*,k+1}^{\text{(EV)}} \\ a_{*,k+1}^{\text{(EV)}} 
    \end{bmatrix}}_{x_{*,k+1}^{\text{(EV)}}} = \underbrace{\begin{bmatrix}
        1 & T_\text{p} & \frac{1}{2}{T_\text{p}}^2  \\
        0 & 1 & T_\text{p} \\
        0  & 0 & 1 \\  
    \end{bmatrix}}_A \underbrace{\begin{bmatrix}
       p_{*,k}^{\text{(EV)}} \\v_{*,k}^{\text{(EV)}} \\ a_{*,k}^{\text{(EV)}} 
    \end{bmatrix}}_{x_{*,k}^{\text{(EV)}}} + \underbrace{\begin{bmatrix}
        \frac{1}{6}{T_\text{p}}^3 \\ \frac{1}{2}{T_\text{p}}^2 \\ T_\text{p}
    \end{bmatrix}}_B \underbrace{j_{*,k}^{\text{(EV)}}}_{u_{*,k}^{\text{(EV)}}},
\end{equation}
with the prediction time step ${T_\text{p}}$, the states $x_{*,k}^{\text{(EV)}}$, the position $p_{*,k}^{\text{(EV)}}$, the velocity $v_{*,k}^{\text{(EV)}}$, the acceleration $a_{*,k}^{\text{(EV)}}$, the control inputs $u_{*,k}^{\text{(EV)}}$, and the jerk $j_{*,k}^{\text{(EV)}}$, where $* \in \{\text{lon},\text{lat}\}$. \eqref{equ:vehiclemodel}
can be rewritten as follows:
\begin{equation} \label{equ:vehiclemodel_lonlat}
       x_{k+1}^{\text{(EV)}}  
    = \underbrace{\begin{bmatrix}
        A & 0_{3 \times 3}  \\
        0_{3 \times 3} & A   
    \end{bmatrix}}_{\Bar{A}} x_{k}^{\text{(EV)}}
     + \underbrace{\begin{bmatrix}
        B \\ B 
    \end{bmatrix}}_{\Bar{B}} u_{k}^{\text{(EV)}},
\end{equation}
where $x_k^{\text{(EV)}}$ is defined in \eqref{equ:motion state_IMM}, and $u_k^{\text{(EV)}} =\begin{bmatrix}
    u_{\text{lon},k}^{\text{(EV)}} & u_{\text{lat},k}^{\text{(EV)}}
\end{bmatrix}^\top$. Note that the prediction time step ${T_\text{p}}$ differs from the sampling time step $T$ of the IMM-KF, which usually satisfies $T_\text{p} > T$. 

\subsection{Scenario-based Model Predictive Controller}

Considering lane-keeping and lane-change as possible motion behaviors of the EV, two control modes are proposed leading to different reference trajectories. The first control mode aims to make the EV keep its velocity and stay in the current lane, the second control mode aims to lead the EV to a target lane while maintaining its speed. The decision-making module for choosing the control input resulting from these controllers is integrated into the controller design. The input corresponding to the minimal cost function value is applied to the system. 

In addition to the generated scenarios, a so-called `worst case' scenario is introduced to guarantee the recursive feasibility of CFTOCP. In this scenario, the LV is assumed to be decelerating with its minimum acceleration over the prediction horizon. We introduce two sequences of control inputs $u_{0}^{\text{(EV)}},...,u_{N-1}^{\text{(EV)}}$ and $\breve{u}_{0}^{\text{(EV)}},...,\breve{u}_{N-1}^{\text{(EV)}}$. The first input sequence is calculated by avoiding collision between EV and the LV/TVs under the generated scenarios, which is used to calculate the value of the cost function with its associated states $x_{k}^{\text{(EV)}}$. 
The second input sequence is obtained by considering the safety constraints under the `worst case' scenario, its associated states are $\breve{x}_{k}^{\text{(EV)}}$, and the terminal set of the states is $\breve{\mathbb{X}}^{\text{(EV)}}_f$, which is described in the proof of recursive feasibility. 
The first computed inputs $u_{0}^{\text{(EV)}}$ and $\breve{u}_{0}^{\text{(EV)}}$ must be equal in order to guarantee the recursive feasibility. The CFTOCP is formulated as 
\begin{subequations}
    \begin{align}
    J = & \min_{ u_k^{\text{(EV)}}, x_{k+1}^{\text{(EV)}} } \sum_{k = 0}^{N-1}
    {\left\lVert x_{k+1}^{\text{(EV)}}-x_{\text{ref},{k+1}}^{\text{(EV)}} \right\rVert}_{\Bar{Q}} + {\left\lVert u_k^{\text{(EV)}} \right\rVert}_{\Bar{R}},   \\
        s.t. ~~ &
        \Scale[0.85]{x_{k+1}^{\text{(EV)}} = f(x_k^{\text{(EV)}},u_k^{\text{(EV)}}), ~~k = 0,1,...,N-1,}\\
        &
        \Scale[0.85]{\breve{x}_{k+1}^{\text{(EV)}} = f(\breve{x}_k^{\text{(EV)}},{\breve{u}}_k^{\text{(EV)}}), ~~k = 0,1,...,N-1,}\\     
        &  \Scale[0.85]{x_k^{\text{(EV)}} \in \mathbb{X}^{\text{(EV)}},~~~\breve{x}_k^{\text{(EV)}} \in \breve{\mathbb{X}}^{\text{(EV)}},  ~~  k = 0,1,...,N-1,}   \\  
        & \Scale[0.85]{ u_k^{\text{(EV)}} \in \mathbb{U}^{\text{(EV)}}, ~~~
        {\breve{u}}_k^{\text{(EV)}} \in \breve{\mathbb{U}}^{\text{(EV)}},  ~~ k = 0,1,... ,N-1, }  \\    
        & \Scale[0.85]{ u_{0}^{\text{(EV)}} = \breve{u}_{0}^{\text{(EV)}},} \\
        &\Scale[0.85] {\breve{x}_N^{\text{(EV)}} \in \breve{\mathbb{X}}^{\text{(EV)}}_f} \\
        & \Scale[0.85]{ x_0^{\text{(EV)}} = \breve{x}_0^{\text{(EV)}}=  x^{\text{(EV)}}(0).}
\end{align}
\end{subequations}
Here $x_{\text{ref},{k+1}}^{\text{(EV)}}$ is the reference state of the reference trajectory based on the relevant control mode. Note that we consider the EV only changes one lane in the lane-change mode according to the real traffic situation, so the number of reference trajectories in lane-change mode depends on the current lane of the EV. $ \Bar{Q} \in \mathbb{R}^{6 \times 6} $ and $\Bar{R} \in \mathbb{R}^{2 \times 2}$ are positive definite weighting matrices for tuning. The feasible state sets $\mathbb{X}^{\text{(EV)}}$ and $\breve{\mathbb{X}}^{\text{(EV)}}$, and input set $ \mathbb{U}^{\text{(EV)}}$ and $\breve{\mathbb{U}}^{\text{(EV)}}$ are limited by appropriate constraints, as detailed below. 
\begin{remark} 
   If there is no LV in reality, it is assumed that there is an LV far away from the EV.
\end{remark} 
\begin{remark} \label{rem:deactivatedcases}
During the lane change of the EV, we call the lane-keeping control mode deactivated when keeping the current lane is infeasible.
\end{remark} 

\subsection{Constraints}
 \subsubsection{State and input constraints}
The traffic rules limit the velocity of the vehicles, and the acceleration and jerk are bounded to have a comfortable driving feel. The lateral position of EV is limited by the upper and lower bounds $[l_\text{ub},l_\text{lb}]$ of the lane
\begin{subequations}
\begin{align}
   \noindent  0 <  v_{\text{lon},k}^{\text{(EV)}} , ~~~ & l_{\text{lb}} \le  p_{\text{lat},k}^{\text{(EV)}} \le l_{\text{ub}} \\
   \noindent  \underline{a}_{\text{lon}} 
   \le  a_{\text{lon},k}^{\text{(EV)}} \le \overline{a}_{\text{lon}}, ~~~
   & \underline{a}_{\text{lat}}  \le   a_{\text{lat},k}^{\text{(EV)}} \le \overline{a}_{\text{lat}} , \\
   \noindent \underline{j}_{\text{lon}}  \le  j_{\text{lon},k}^{\text{(EV)}} \le \overline{j}_{\text{lon}}, ~~~ & \underline{j}_{\text{lat}}  \le  j_{\text{lat},k}^{\text{(EV)}} \le \overline{j}_{\text{lat}} ,
\end{align} 
\end{subequations}
where $\underline{\bullet}$ and $\overline{\bullet}$ denote the minimum and maximum values of the associated variables. 

\subsubsection{Safety constraints}
In highway traffic, drivers are required to maintain a safe distance from the preceding vehicles in the same lane, which is translated into the constraint:
 \begin{equation} 
   d_k \geq \underline{d}, 
 \end{equation}
and the safety distance $\underline{d}$ is computed by \cite{distance}
  \begin{equation}\label{equ:safetydistance}
     \underline{d} = \tau v_{\text{lon},k}^{\text{(EV)}} + \bigtriangleup  d,
 \end{equation}
with the design parameters $\tau$ and $\bigtriangleup d$. If the reference point of all vehicles is in their respective center, for example, choose $\bigtriangleup d \geq \frac{l^{(\text{EV})}+l^{(\text{LV})}}{2}$, where $l^{(\text{EV})}$ and $l^{(\text{LV})}$ are the length of EV and LV. During the lane-change period, in addition to keeping a safe distance from the LVs in both the current and target lane, the EV should also maintain a safe distance with the TV behind it in the target lane \cite{SCMPC:lanechangeassistance}. The required safety distance also satisfies \eqref{equ:safetydistance}. The safety constraint under the generated scenario is based on \eqref{equ:safetydistance},  while the safety distance for considering the `worst case' scenario collapses to $\bigtriangleup d$.

\subsection{Recursive Feasibility of the SCMPC}
The proof of the recursive feasibility is able to provide a mathematical guarantee for the feasibility of the designed controller.
\begin{definition} 
 (\textbf{Recursive Feasibility}) The SCMPC in lane-keeping mode is called recursively feasible if a collision with the current LV is always avoidable, while for SCMPC in lane-change mode, it means that no accident occurs between EV and the other vehicles during the lane-change first, and then EV is always safe in the target lane.
\end{definition}

If the safety constraints in the `worst case' scenario are satisfied, it implies that the EV is able to handle all possible traffic circumstances under the SCMPC. The corresponding capability of EV is represented by a parameter, the minimal stopping horizon, which is defined as follows.
\begin{definition}
      (\textbf{Minimal Stopping Horizon}) Given the initial velocity $v_{ {\text{lon}},0}^{\text{(EV)}}$ and the minimal acceleration $\underline{a}_{\text{lon}}^{\text{(EV)}}$ of EV, the minimal stopping horizon $\underline{N} \in \mathbb{N}$ satisfies  
\begin{equation}\label{equ:minimal stopping horizon}
     \underline{N} = \left\lceil  {\frac{v_{ {\text{lon}},0}^{\text{(EV)}}} {\lvert \underline{a}_{\text{lon}}^{\text{(EV)}}  \rvert T_\text{p}}} \right\rceil,
\end{equation}
\end{definition}
\noindent where $\left\lceil \bullet \right\rceil$ is defined as the smallest integer that is not smaller than a real number $\bullet$.

Considering the general traffic situation and rules, we make the following assumption.
\begin{assumption} \label{ass1}
All vehicles only drive forwards, and the EV is only responsible for front collisions.
\end{assumption}
\begin{assumption} \label{ass2}
$u_{k}^{\text{(EV)}}={\begin{bmatrix}
         0 & 0
     \end{bmatrix}}^\top $ is one element of the feasible set $\breve{\mathbb{U}}$.
\end{assumption}

The recursive feasibility of the SCMPC is then proved below. 
\begin{theorem} \label{thm}
If SCMPC is initially feasible, and the prediction horizon $N \geq \underline{N}$, then  the controller is recursively feasible based on Assumptions \ref{ass1}, \ref{ass2}.
\end{theorem}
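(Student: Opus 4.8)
The plan is to follow the classical template for recursive feasibility in MPC, adapted to the twin-trajectory structure of this CFTOCP. First I would fix a control-invariant terminal set $\breve{\mathbb{X}}^{\text{(EV)}}_f$, then verify that $N\ge\underline{N}$ makes it reachable, and finally construct an explicit feasible candidate at the next time step by shifting the worst-case input sequence and appending a terminal action. The decisive feature is the coupling $u_{0}^{\text{(EV)}}=\breve{u}_{0}^{\text{(EV)}}$: the performance trajectory under the generated scenarios need not be feasible beyond its first input, but the worst-case trajectory must always end in $\breve{\mathbb{X}}^{\text{(EV)}}_f$, and this is what propagates safety.

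First I would make the terminal set explicit, taking $\breve{\mathbb{X}}^{\text{(EV)}}_f$ to be the set of EV states at standstill, namely $v_{\text{lon}}^{\text{(EV)}}=0$ and $a_{\text{lon}}^{\text{(EV)}}=0$ with the longitudinal gap to the worst-case LV at least $\bigtriangleup d$. I would then show this set is control-invariant under the worst-case dynamics. By Assumption~\ref{ass2} the zero-jerk input lies in $\breve{\mathbb{U}}$, and applying it in model~\eqref{equ:vehiclemodel} keeps a stopped EV stopped. By Assumption~\ref{ass1} every vehicle moves only forward, so the LV can only advance or stay put; a gap of at least $\bigtriangleup d$ therefore cannot shrink, and the collapsed terminal safety constraint is preserved for all future steps. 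To see that the set is reachable, observe that~\eqref{equ:minimal stopping horizon} is exactly the number of prediction steps needed to bring the EV from $v_{\text{lon},0}^{\text{(EV)}}$ to rest at the extremal deceleration $\underline{a}_{\text{lon}}^{\text{(EV)}}$; with $N\ge\underline{N}$ the worst-case braking profile attains zero velocity within the horizon, so a terminal state in $\breve{\mathbb{X}}^{\text{(EV)}}_f$ exists whenever the problem is initially feasible.

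For the recursive step, suppose the CFTOCP is feasible at time~$t$ with worst-case sequence $\breve{u}_{0}^{\text{(EV)}},\dots,\breve{u}_{N-1}^{\text{(EV)}}$, states $\breve{x}_{0}^{\text{(EV)}},\dots,\breve{x}_{N}^{\text{(EV)}}$, and $\breve{x}_{N}^{\text{(EV)}}\in\breve{\mathbb{X}}^{\text{(EV)}}_f$. At time~$t+1$ I would propose the shifted candidate defined by $\breve{u}_{k}^{\text{(EV)},+}=\breve{u}_{k+1}^{\text{(EV)}}$ for $k=0,\dots,N-2$ and $\breve{u}_{N-1}^{\text{(EV)},+}=[0\ 0]^{\top}$. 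The first $N-1$ shifted inputs and their states inherit the state, input, and safety constraints directly from feasibility at~$t$, while the appended zero-jerk step keeps the terminal state in $\breve{\mathbb{X}}^{\text{(EV)}}_f$ by invariance. The key technical point is a monotonicity argument for the worst-case LV: since the realized LV advances (Assumption~\ref{ass1}) and cannot brake harder than the extremal rate, its actual state at $t+1$ has position and velocity no smaller than the worst-case prediction made at~$t$, so the freshly generated worst-case LV trajectory from $t+1$ stays ahead of the continuation used at~$t$; the shifted safety-distance constraints therefore remain satisfied. Finally, $u_{0}^{\text{(EV)}}=\breve{u}_{0}^{\text{(EV)}}$ guarantees that the input actually applied to the plant is the first element of this feasible worst-case sequence, so a safe continuation always exists and feasibility propagates to~$t+1$.

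It remains to run the argument through both control modes, and this is where I expect the main obstacle. In lane-keeping mode only the current LV is relevant and the reasoning above applies verbatim. In lane-change mode the worst-case stopping maneuver must simultaneously respect the collapsed safety distance against the LVs in both the current and target lanes and against the rear TV in the target lane; once the lane change completes, only the target-lane LV remains and the standstill terminal set is invariant as before. The hard part will be showing that the shifted worst-case trajectory keeps all of these safety constraints feasible through the lateral transition, and that the terminal set remains control-invariant after the EV settles into the target lane. The longitudinal stopping and monotonicity arguments are clean in isolation; coupling them with the lateral maneuver and the multiple simultaneous safety distances is where the proof demands the most care.
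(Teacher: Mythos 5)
Your proposal follows essentially the same argument as the paper's proof: shift the worst-case input sequence by one step, append the zero-jerk input (Assumption~2), and exploit that the standstill state reached within $\underline{N}\le N$ steps is invariant under zero input, so the shifted sequence remains feasible and the coupling $u_{0}^{\text{(EV)}}=\breve{u}_{0}^{\text{(EV)}}$ transfers this feasibility to the applied input. You are in fact more careful than the paper on two points it leaves implicit --- the monotonicity of the realized LV relative to its worst-case prediction (needed so the shifted safety constraints hold against the re-generated worst case at the next step), and the handling of the lane-change mode's multiple simultaneous safety distances --- both of which the paper subsumes in the single closing claim that the shifted sequences ``satisfy the dynamics and the constraints.''
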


\begin{proof}
     Let two initial control inputs of the generated normal scenarios and the `worst case' scenario be $\{{{u}_{0|0}^{(\text{EV})}}, {{u}_{1|0}^{(\text{EV})}}, ..., {{u}_{\underline{N}|0}^{(\text{EV})}},...,{{u}_{N|0}^{(\text{EV})}} \}$ and $\{{\breve{u}_{0|0}^{(\text{EV})}}, {\breve{u}_{1|0}^{(\text{EV})}}, ..., {\breve{u}_{\underline{N}|0}^{(\text{EV})}},...,{\breve{u}_{N|0}^{(\text{EV})}} \}$.
     Choose the second control sequences as initially feasible solution $\{{\breve{u}_{0|0}}^{{(\text{EV})}\star}, {\breve{u}_{1|0}}^{{(\text{EV})}\star}, ..., {\breve{u}_{\underline{N}|0}}^{{(\text{EV})}\star},...,{\breve{u}_{N|0}}^{{(\text{EV})}\star} \}$ in the proof, and its related state sequence is $ \{{\breve{x}_{0|0}}^{{(\text{EV})}\star}, {\breve{x}_{1|0}}^{{(\text{EV})}\star}, ..., {\breve{x}_{\underline{N}|0}}^{{(\text{EV})}\star},...,{\breve{x}_{N|0}}^{{(\text{EV})}\star} \}$. Due to the condition \eqref{equ:minimal stopping horizon}, the terminal set $\breve{\mathbb{X}}_{\text{f}}$ satisfies 
    \begin{equation}
        {\breve{x}_{N|0}}^{{(\text{EV})}\star} = \begin{bmatrix} p_{{\text{lon}},\underline{N}|0}^{(\text{EV})} & 0 & 0 & p^{(\lambda)} & 0 & 0
        \end{bmatrix}^{\top},
    \end{equation}
    where the stopping longitudinal position $p_{{\text{lon}},\underline{N}|0}^{(\text{EV})}$ of EV is determined by its initial position  $p_{ {\text{lon}},0}^{\text{(EV)}}$, initial velocity  $v_{ {\text{lon}},0}^{\text{(EV)}}$ and minimal acceleration ${\underline{a}_{\text{lon}}^{\text{(EV)}}}$. Moreover, the terminal lateral position of EV is the position of the center line of the target lane $p^{(\lambda)}$ under the associated control mode.
    According to the system dynamics \eqref{equ:vehiclemodel_lonlat}, we apply ${\breve{u}_{0|0}}^{{(\text{EV})}\star}$ to the system and obtain
    \begin{equation}
        x_{1}^{{(\text{EV})}} = \Bar{A}x_{0}^{{(\text{EV})}} + \Bar{B}{\breve{u}_{0|0}}^{{(\text{EV})}\star} = {\breve{x}_{1|0}}^{{(\text{EV})}\star}. 
    \end{equation}
    Then the following is a feasible solution for the MPC problem initialized at $x_{1}^{{(\text{EV})}}$:
     \begin{equation}
     \begin{split}
         \{{{u}_{1|1}}, {{u}_{2|1}}, ..., {{u}_{\underline{N}|1}},...,{{u}_{N-1|1}},{{u}_{N|1}}\} = 
         \\
         \{{\breve{u}_{1|0}}^{\star}, {\breve{u}_{2|0}}^{\star}, ..., {\breve{u}_{\underline{N}|0}}^{\star},...,{\breve{u}_{N-1|0}}^{\star}, {\begin{bmatrix}
         0 & 0
     \end{bmatrix}}^\top \},
     \end{split}
     \end{equation}
     where ${\begin{bmatrix}
         0 & 0
     \end{bmatrix}}^\top \in \breve{\mathbb{U}}$.
    The corresponding state sequence is 
     \begin{equation}
     \begin{split}
         \{{{x}_{2|1}}, {{x}_{3|1}}, ..., {{x}_{\underline{N}+1|1}},...,{{x}_{N|1}},{{x}_{N+1|1}}\} = 
         \\
         \{{\breve{x}_{2|0}}^{\star}, {\breve{x}_{3|0}}^{\star}, ..., {\breve{x}_{\underline{N}|0}}^{\star},...,{\breve{x}_{N|0}}^{\star}, \bar{A}{\breve{x}_{N|0}}^{\star}+\bar{B}\cdot {\begin{bmatrix}
         0 & 0
     \end{bmatrix}}^\top \},
     \end{split}
     \end{equation}
    where $ \bar{A}{\breve{x}_{N|0}}^{\star}+\bar{B} \cdot {\begin{bmatrix}
         0 & 0
     \end{bmatrix}}^\top = {\breve{x}_{N|0}}^{\star}$.
    Both sequences are feasible for the MPC problem because they satisfy the dynamics and the constraints. 
\end{proof}

\section{SIMULATION AND DISCUSSION}
The presented approach is evaluated with one documentation (ID:$01$) of the HighD Dataset \cite{highDdataset}, which records the motion states of $1047$ vehicles in $900s$ with a sampling time of $0.04s$. We take two specific traffic situations as case studies, including the EV's initial motion states and the TVs' motion states during the simulation time. The designed control system manipulates the motion behavior of the EV in the next time steps. 

\subsection{Simulation Setup}
Fig. \ref{fig:initial_scene} gives the initial traffic scenes of case studies. Table \ref{table:parameters in common} shows the parameters used in the two case studies. The sampling time of the simulation is denoted as $T_\text{s}$. Table \ref{table:parameters individual} shows the initial mode states of vehicles with the proper units, and the length $l$ and width $w$ of vehicles in Cases 1 and 2. 
Note that we only consider EV changes to lane 1 in the lane-change mode of the simulation.
 \begin{figure} [ht] 
    \centering
    \framebox{
    \includegraphics[width=0.45\textwidth]{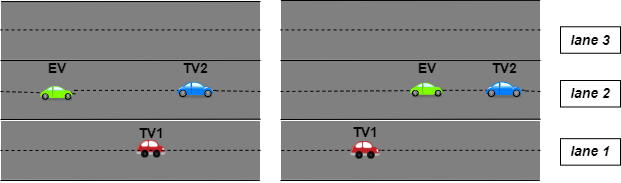}}
    \caption{The initial traffic scenes for Case 1 (left) and Case 2 (right)}
    \label{fig:initial_scene}
\end{figure}

\begin{table} [h!]
   \caption{Design parameters used in the simulation study}
   \rowcolors{1}{gray!50}{white!}
   \label{table:parameters in common}
   \begin{adjustbox}{width=0.48\textwidth}
    \begin{tabular}{|c||c||c||c||c|}
     \hline 
      \multicolumn{2}{|c||}{{ Parameter}} &$T_\text{s}(\text{s})$ & $T(\text{s})$ & $T_\text{p}(\text{s})$   \\
     \hline 
     \multicolumn{2}{|c||}{Value} &  $10.00$ & $0.04$ & $0.40$ \\
      \hline \hline
      \multicolumn{2}{|c||}{{ Parameter}} & $N$ &{$\underline{a}_{\text{lon}}^{(\text{EV})}(\text{m}/\text{s}^2)$} & {$\overline{a}_{\text{lon}}^{(\text{EV})}(\text{m}/\text{s}^2)$}  \\
     \hline 
     \multicolumn{2}{|c||}{Value} & $15$ &  $-4.00$ & $4.00$\\
      \hline \hline
      \multicolumn{2}{|c||}{{ Parameter}} & {$\underline{a}_{\text{lat}}^{(\text{EV})}(\text{m}/\text{s}^2)$}  & {$\overline{a}_{\text{lat}}^{(\text{EV})}(\text{m}/\text{s}^2)$} &{$\underline{j}_{\text{lon}}^{(\text{EV})}(\text{m}/\text{s}^2)$} \\
     \hline 
     \multicolumn{2}{|c||}{Value} & $-4.00$ & $4.00$ &  $-5.00$ \\
     \hline \hline
     \multicolumn{2}{|c||}{{ Parameter}} & {$\overline{j}_{\text{lon}}^{(\text{EV})}(\text{m}/\text{s}^2)$} & {$\underline{j}_{\text{lat}}^{(\text{EV})}(\text{m}/\text{s}^2)$}  & {$\overline{j}_{\text{lat}}^{(\text{EV})}(\text{m}/\text{s}^2)$} \\
     \hline 
     \multicolumn{2}{|c||}{Value} & $5.00$ & $-4.00$ & $4.00$ \\
     \hline \hline
     \multicolumn{2}{|c||}{{ Parameter}} &$\tau(\text{s})$ & $l_\text{ub}(\text{m})$  & $l_\text{lb}(\text{m})$ \\
     \hline 
     \multicolumn{2}{|c||}{Value} &  $0.40$ & $33.80$ & $21.00$ \\
     \hline \hline
     \multicolumn{2}{|c||}{{ Parameter}} & $p_c^{(1),(2),(3)}(\text{m})$ &$K_{\text{lat}}$ & {$\underline{a}_{\text{lon}}^{(\text{LV})}(\text{m}/\text{s}^2)$} \\
    \hline 
     \multicolumn{2}{|c||}{Value} & $\{22.98, 26.88, 31.3\}$ &  $[1.15,3.39,3.58]^\top$ & $-3.00$ \\
     \hline 
       \end{tabular}
  \end{adjustbox}
\end{table}

\begin{table} [h!]
\rowcolors{1}{gray!50}{white!}
\caption{States initial values and vehicle parameters} \label{table:parameters individual}
\begin{adjustbox}{width=0.48\textwidth}
\begin{subtable}{.5\textwidth}
\caption{Case 1}
\centering
   \begin{tabular}{|c||c||c||c|}
     \hline 
     \multicolumn{2}{|c||}{Parameter} & $\{p_{\text{lon},0},v_{\text{lon},0},a_{\text{lon},0},p_{\text{lat},0},v_{\text{lat},0},a_{\text{lat},0}\}$ & $\{l,w\}$  \\
     \hline \hline
      \multicolumn{2}{|c||}{EV} & $\{2.84, 34.8, 0.27, 25.65, -0.09, -0.01\}$ & $\{4.85, 2.02\}$  \\
       \hline \hline
      \multicolumn{2}{|c||}{TV1} & $\{127.87, 32.74, 0.13, 21.52, -0.21, 0.2\}$ & $\{5.96, 2.32\}$  \\
       \hline \hline
      \multicolumn{2}{|c||}{TV2} & $\{141.19, 23.04, 0.05, 25.42, 0.1, -0.03\}$ & $\{14.35, 2.5\}$ \\
      \hline 
   \end{tabular}
\end{subtable}
\end{adjustbox}
\newline
\vspace*{0.2 cm}
\newline
\begin{adjustbox}{width=0.48\textwidth}
\begin{subtable}{.51\textwidth}
\caption{Case 2}
\centering 
   \begin{tabular}{|c||c||c||c|}
   \hline 
     \multicolumn{2}{|c||}{Parameter} & $\{p_{\text{lon},0},v_{\text{lon},0},a_{\text{lon},0},p_{\text{lat},0},v_{\text{lat},0},a_{\text{lat},0}\}$ & $\{l,w\}$  \\
     \hline \hline
      \multicolumn{2}{|c||}{EV} & $\{181.5, 25.07, -0.29, 25.49, 0.09, -0.01\}$ & $\{4.14, 1.92\}$ \\
       \hline \hline
      \multicolumn{2}{|c||}{TV1} & $\{151.54, 32.59, 0.28, 22.21, -0.27, 0.01\}$ & $\{4.75, 2.02\}$ \\
       \hline \hline
      \multicolumn{2}{|c||}{TV2} & $\{201.31, 23.21, 0.17, 25.62, 0.17, -0.02\}$ & $\{9.2, 2.5\}$ \\
      \hline
   \end{tabular}
\end{subtable}
\end{adjustbox}
\end{table}

\subsection{Simulation Results}
Figs. \ref{fig:control result of Case 1} and \ref{fig:control result of Case 2} show the open-loop trajectories of EV over time, the closed-loop trajectory of EV, and TVs' trajectory in Cases 1 and 2. 

\subsubsection{Case 1} The longitudinal and lateral velocity profile of the TVs and the EV throughout the entire sampling time are illustrated in Fig. \ref{fig:case1_vel}. Fig. \ref{fig:cost_case1} displays the cost function value of two controllers, where the associated cost of the lane-keeping or lane-change controller is set as an arbitrary big value, like 5000, when it is deactivated. As Fig. \ref{fig:cost_case1} shows, between $0s$ and $7.6s$, lane-keeping is the desirable maneuver. Between $0s$ and $4s$, the EV keeps its velocity and then decelerates until around $7.6s$ to have a safe distance with TV2. After that, at $7.6s$, the cost of changing the lane becomes lower, prompting the EV to initiate a lane-change maneuver. After approximately $1.5s$, the lane-change maneuver is successfully executed. Subsequently, the EV maintains a constant velocity in its new lane for the remaining time steps, while the lane-change behavior is very costly and becomes impossible at around $11.8s$.
In particular, the increasing values between $9.2s$ and $11.8s$ lead to a visual discontinuity in the lane-change cost-function curve, the largest of which reaches a value of 56000. 

 \begin{figure} [h!] 
    \centering
    \includegraphics[width=0.5\textwidth]{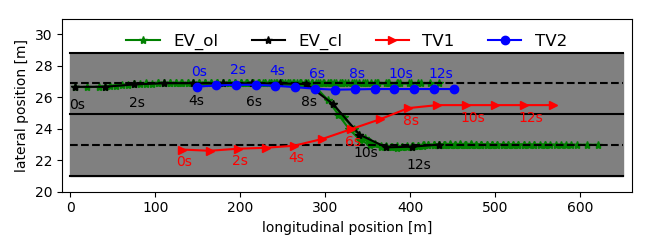}
    \caption{The motion trajectory of vehicles in Case 1}
    \label{fig:control result of Case 1}
\end{figure}
\begin{figure}[h!] 
     \centering
         \includegraphics[width=0.5\textwidth]{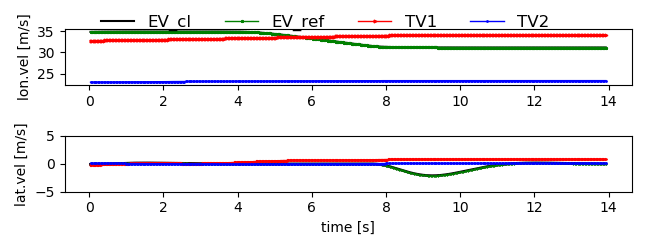}
         \caption{The longitudinal and lateral velocity profile in Case 1}
         \label{fig:case1_vel} 
\end{figure}
\begin{figure}[ht] 
     \centering
         \includegraphics[width=0.5\textwidth]{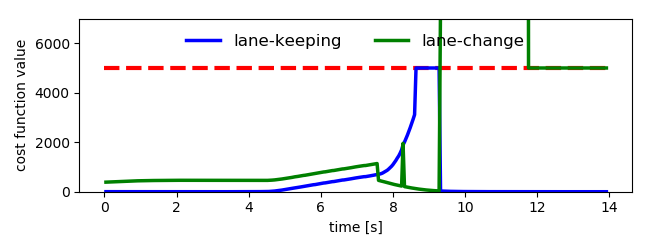}
         \caption{The cost function value of controllers in Case 1}
         \label{fig:cost_case1} 
\end{figure}

The traffic prediction results for Case 1 at $5s$ and $9s$ are shown as two examples in Fig. \ref{fig:case1_5s} and \ref{fig:case1_9s}. There
are three generated scenarios at 5s, which includes TV1 performing a VT maneuver in lane 1, edge color in black, or in lane 2, edge color in blue, and TV2 performing VT maneuvers in lane 1, edge color in red, or in lane 2, edge color in blue. The probabilities of the associated scenarios are $ \{{\mu}^{(1)}_1{\mu}^{(2)}_2,{\mu}^{(1)}_2{\mu}^{(2)}_1,{\mu}^{(1)}_2{\mu}^{(2)}_2\}={\{0.241,0.186,0.573\}}$. 
At $9s$, TV1 might perform the VT maneuver in lane 2, edge color in black, or in lane 3, edge color in magenta, and TV2 might perform a VT maneuver in lane 2, edge color in black. The associated probabilities of the generated scenarios are $\{\mu^{(1)}_2\mu^{(2)}_2,\mu^{(1)}_3\mu^{(2)}_2\} = {\{0.913,0.087\}}$.

\begin{figure}[h!] 
     \centering    
   \includegraphics[width=0.5\textwidth]{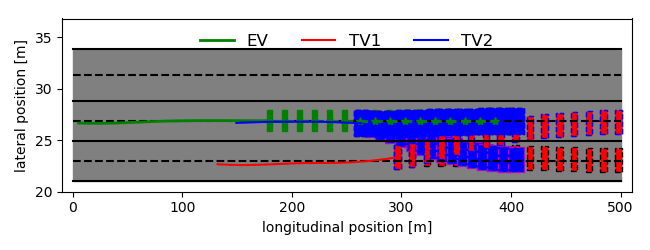}
         \caption{Traffic prediction at $5s$ in Case 1}
    \label{fig:case1_5s}
\end{figure}
\begin{figure}[h!] 
     \centering   
   \includegraphics[width=0.5\textwidth]{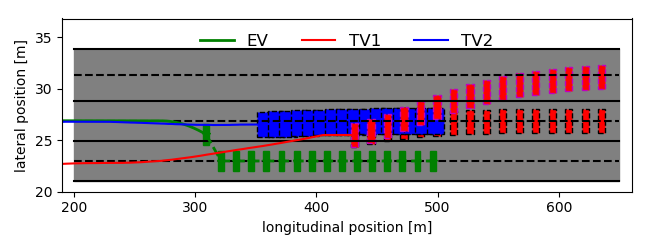}
         \caption{Traffic prediction at $9s$ in Case 1}
    \label{fig:case1_9s}
\end{figure}

\subsubsection{Case 2} The vehicles' motion trajectories, the velocity profile, and the cost function value of two controllers are displayed in Figs. \ref{fig:control result of Case 2}, \ref{fig:case2_vel} and \ref{fig:cost_case2}. Between $0s$ and $2s$, the EV maintains its speed under the control strategy while the cost keeps increasing. The main reason which leads to this situation is that it is impossible to maintain a safe distance from its LV, TV2, by consistently keeping a higher velocity throughout the prediction horizon, so the inconsistency between desired motion states and the calculated states from the controller leads to an increased cost. Since then, the EV starts to reduce its speed to the speed of TV2 until $4s$, which accordingly leads to a decrease in the value of the cost function. 
Note that the lane-change becomes available after about $5s$, before which we set its associated cost function value as 5000. After that, its cost remains higher than lane-keeping. Therefore, EV stays in the current lane in the following time steps. 

  \begin{figure} [h!] 
    \centering
    \includegraphics[width=0.5\textwidth]{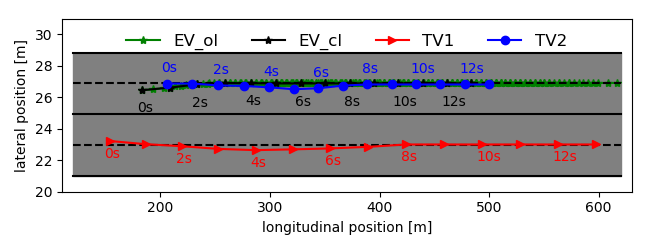}
    \caption{The motion trajectory of vehicles  in Case 2}
    \label{fig:control result of Case 2}
\end{figure}
\begin{figure}[h!]
     \centering
     \includegraphics[width=0.5\textwidth]{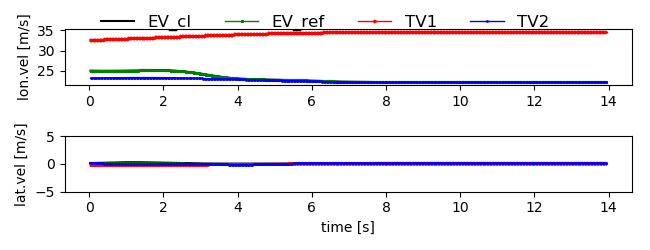}
    \caption{The longitudinal and lateral velocity profile in Case 2}
    \label{fig:case2_vel}
\end{figure}
\begin{figure}[h!]
     \centering
     \includegraphics[width=0.5\textwidth]{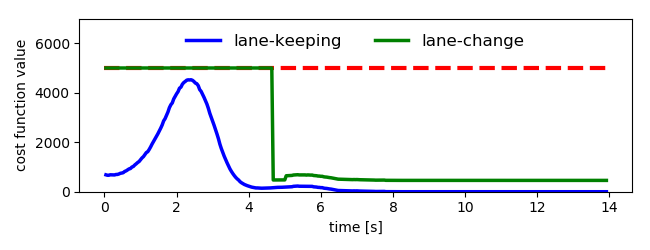}
    \caption{The cost function value of controllers  in Case 2}
    \label{fig:cost_case2}
\end{figure}

The traffic prediction results for Case 2 at $3s$ and $6s$ are shown in Figs. \ref{fig:case2_3s} and \ref{fig:case2_6s}. There are two generated scenarios at $3s$, which includes TV1 performing a VT maneuver in lane 1, edge color in black, or in lane 2, edge color in white, and TV2 performing VT maneuvers in lane 2, edge color in black. The associated probabilities are $ \{\mu^{(1)}_1\mu^{(2)}_2,\mu^{(1)}_2\mu^{(2)}_2\} = {\{0.860,0.140\}}$. At $6s$, there are three considered scenarios, where TV1 might perform a VT maneuver in lane 1, edge color in black, and TV2 might perform a VT maneuver in lane 1, in lane 2, or in lane 3, edge color in black, white, and blue. The corresponding probabilities are $ \{\mu^{(1)}_1\mu^{(2)}_1,\mu^{(1)}_1\mu^{(2)}_2,\mu^{(1)}_1\mu^{(2)}_3\} = {\{0.130,0.727,0.143\}} $.

\begin{figure}[h!] 
     \centering
     \includegraphics[width=0.5\textwidth]{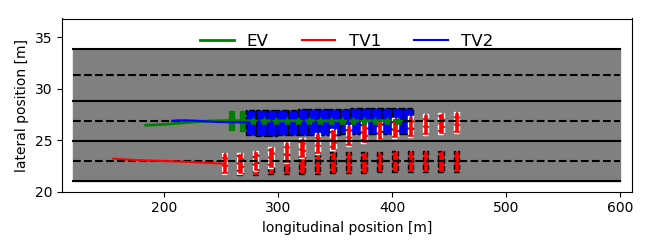}
    \caption{Traffic prediction at $3s$ in Case 2}
    \label{fig:case2_3s}
\end{figure}
\begin{figure}[h!] 
     \centering
        \includegraphics[width=0.5\textwidth]{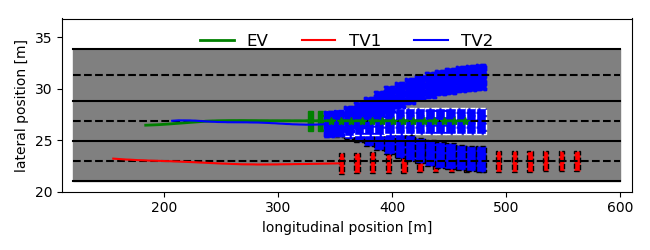}
    \caption{Traffic prediction at $6s$ in Case 2}
    \label{fig:case2_6s}
\end{figure}

Simulation results of Cases 1 and 2 indicate that EV executes safe maneuvers under the designed control structure while considering the interaction with other vehicles.

\section{CONCLUSIONS}
In this paper, the interaction-aware estimation of motion states of the vehicles has been studied using IMM-KF, and the associated state predictions have been combined with the probability to represent the uncertain environment. The generated scenarios, along with the `worst case' scenario, have been applied in formulating the safety constraints of the SCMPC. The control system consists of lane-keeping and lane-change control modes, where the control input with a lower cost function value is implemented into the system. Moreover, the recursive feasibility of the method has been guaranteed based on the no-collision between EV and LV before the minimal stopping horizon of the EV. The proposed algorithm has been validated for two highway scenes chosen from the HighD dataset. The simulation results demonstrate the capability of the proposed control architecture to perform safe maneuvers.






\bibliographystyle{IEEEtran}
\bibliography{main}

\end{document}